\newcommand{\lastcorrections}%
{{
 \begin{sloppypar}
    \baselineskip -0.2in
    \tiny\bf\noindent
last corrections:\\
\end{sloppypar}
}}
\newcommand{\margincomment}[1]%
    {{%
      \marginpar{{\tiny\begin{minipage}{0.5in}
                       \begin{flushleft}
                          {#1}
                       \end{flushleft}
                       \end{minipage}
                }}
    }}
\newcommand{\ignore}[1]{}
\newcommand{\braced}[1]{{ \left\{ #1 \right\} }}
\newcommand{\NP}{{\mathbb{NP}}}
\newcommand{\OPT}{\textsc{Opt}}
\newcommand{\delay}{{\textit{delay}}}
\newcommand{\excess}{E}
\newcommand{\rmS}{\textrm{S}}
\newcommand{\rmX}{\textrm{X}}
\newcommand{\rmF}{\textrm{F}}
\newcommand{\rmC}{\textrm{C}}
\newtheorem{Theorem}{\em{Theorem}}
\newtheorem{Lemma}{\em{Lemma}}
\newcommand{\pmtn}{{\textit{pmtn}}}
\newcommand{\ThreePartition}{{\textsc{3-Partition}}}
\begin{document}

\title{A Note on $\NP$-Hardness of Preemptive Mean Flow-Time \\Scheduling for Parallel Machines}


\author{Odile  Bellenguez-Morineau \and  Marek Chrobak  \and Christoph
  D\"urr \and  Damien Prot} \institute{Odile  Bellenguez-Morineau \and
  Damien Prot  \at LUNAM Universit\'e, \'{E}cole des  Mines de Nantes,
  IRCCyN UMR CNRS  6597 (Institut de Recherche en  Communication et en
  Cybern\'{e}tique de  Nantes), 4  rue Alfred Kastler,  La Chantrerie,
  BP20722, 44307 Nantes Cedex 3, France
  \\
  \email{odile.morineau, damien.prot@mines-nantes.fr}\\
  Marek Chrobak \at Computer Science Department, University of
  California, Riverside, CA 92521, USA\\
  \email{marek@cs.ucr.edu}\\
  Christoph D\"urr \at CNRS, LIP6, Universit\'e Pierre et Marie Curie,
  4 place Jussieu,
  75252 Paris Cedex 05, France\\
  \email{Christoph.Durr@lip6.fr}\\
}

\date{Received: date / Accepted: date}

\maketitle

\begin{abstract} 
In the paper \emph{``The complexity of mean flow time scheduling problems with release times''},
by Baptiste, Brucker, Chrobak, D\"{u}rr, Kravchenko and Sourd, the authors
claimed to prove strong $\NP$-hardness of the scheduling problem
$P|\pmtn,r_j|\sum C_j$, namely multiprocessor preemptive scheduling where the objective is to
minimize the mean
flow time. We point out a serious error in their proof and give a new proof of strong
$\NP$-hardness for this problem.
\keywords{Scheduling \and Complexity \and Identical machines \and Preemptive problems }
\end{abstract}


\section{Introduction}
\label{sec: introduction}

In~\cite{BBCDKS07} the following scheduling problem was considered. We
are  given $N$  jobs, where each  job $j$  has some  release time  $r_j$ and
processing  time $p_j$, all positive integers.  
The  goal is  to preemptively  schedule these
jobs  on  $m$  parallel  identical  machines,  so as to  minimize  the mean flow
time (or, equivalently, the total completion time).  
We use a standard definition of preemptive schedules, namely
 a schedule is specified by assigning
to each job $j$ a finite set of execution  intervals of $j$, where each such interval
is associated with some machine $k$. For a schedule to be feasible,
all intervals  associated with any machine $k$ must be disjoint, and all
intervals  assigned to the  same job $j$ must be disjoint and start not
earlier than at time $r_j$.
The completion time of a job $j$, denoted by $C_j$, is defined as 
the right endpoint of the last execution  interval  assigned  to $j$.

In the three-field notation for scheduling problems, 
this problem is denoted  $P|\pmtn,r_j|\sum C_j$.  The special case when
there  are  only  $m=2$  parallel   machines  has  been  shown  to  be
$\NP$-hard in \cite{DLY90}, while the single machine variant is
solvable in polynomial time, see~\cite{B74}.

The  computational complexity  of $P|\pmtn,r_j|\sum  C_j$  was studied
by~\cite{BBCDKS07}, and one  of the results in that  paper was a proof
of  strong $\NP$-hardness.   Unfortunately,  as we  show  in the  next
section, that proof  has a serious flaw.  Therefore we provide a new proof
of strong $\NP$-hardness in Section~\ref{sec: new proof}, which builds
on the overall structure of the proof from~\cite{BK04}.


\section{Counterexample to the Proof in ~\cite{BBCDKS07}}
\label{sec: counterexample}

The  (faulty) strong $\NP$-hardness proof  of~\cite{BBCDKS07}  is based on  a
reduction from {\ThreePartition}. It converts an instance of {\ThreePartition}
into a collection of jobs of three types: x-jobs, B-jobs and 1-jobs.  
The role of the B- and 1-jobs
is to force any optimal schedule  to start these jobs at their release
time, and leave  only a small time interval available for the  x-jobs. The key idea
of the proof was that in  an optimal schedule, the x-jobs would have
to be scheduled  without preemption in this interval,
with three jobs per machine, and in this way the resulting schedule would
represent a solution of the original instance of {\ThreePartition}.

The error in this proof is  that it is always possible to schedule
the  x-jobs in the allowed interval using the method described
by~\cite{M59} (see Theorem~3.1):
Order the jobs $1,2,...,N$ arbitrarily. Using the unit slots of
the jobs, in this order, fill the
allowed interval for machine $1$, going from left to right, then
fill the allowed interval for machine $2$, and so on, until all jobs
are processed. 

We now show a specific counter-example to the construction of~\cite{BBCDKS07}.
Let the instance of {\ThreePartition} consist of
positive integers $x_1,\dots,x_{3n},y$  that satisfy $\sum_{i=1}^{3n}x_i = ny$
and  $\frac{y}{4} <  x_i  <  \frac{y}{2}$ for  each  $i$. The objective is to
decide if there  exist a partition of $\{1,...,3n\}$ into
$n$ sets $P_1,...,P_n$ such that $\sum_{i\in P_k}x_i = y$ for all $k$.

Let $A=6ny$ and $B=18n^2y^2$. The reduction in~\cite{BBCDKS07} converts the above
instance of {\ThreePartition} into an instance of $P|\pmtn,r_j|\sum C_j$ with
$n$ machines and $N=4n+An$ jobs of three types:
\begin{itemize}
\item
x-jobs $j$: for all $j\in\{1,\dots,3n\}$, with $r_j=0$ and $p_j=Ax_j$,
\item
B-jobs $j$: for all $j\in \{3n+1,\dots,4n\}$, 
with $r_j=Ay$ and $p_j=B$, 
\item
1-jobs $j$: for all $j\in \{4n+1,\dots,N\}$, 
with $r_j=Ay+B$ and $p_j=1$.
\end{itemize}
The authors claim that the instance of {\ThreePartition} has a solution if and only if there exists 
a schedule with $\sum_{j=1}^N C_j \leq D$, where $D=3nAy+n(Ay+B)+n\sum_{i=1}^A(Ay+B+i)$.
The ``only if'' part of this claim is easy:
on a machine $k$, we first schedule the three x-jobs $j\in P_k$,
then one B-job and finally $A$ 1-jobs. As the example below shows, however, the
``if'' implication is not valid.


\begin{figure}[ht]
\begin{center}
\includegraphics[width=5.5in]{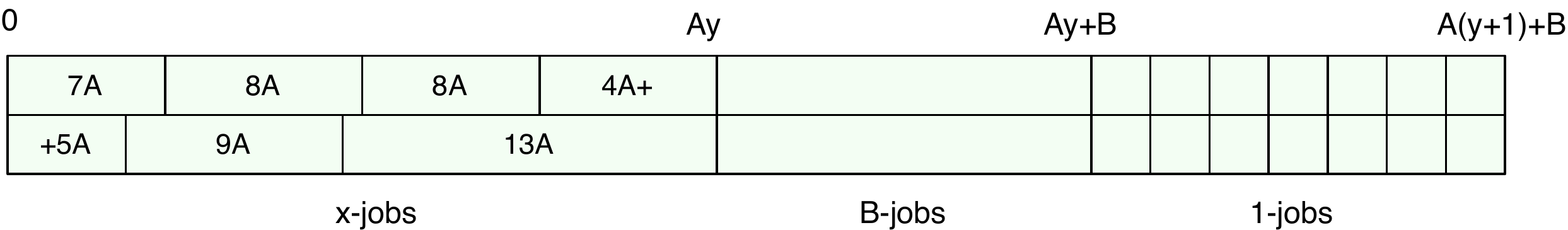}
\caption{The schedule of the jobs from our counter-example. (Picture is not to scale.) 
The x-jobs are scheduled using
McNaughton's algorithm. The 4th x-job is scheduled in two parts, one of length $4A$ scheduled
on machine 1 and the other, of length $5A$, on machine 2.}
\label{fig: counterexample}
\end{center}
\end{figure}


Our counter-example is an instance of {\ThreePartition} with $n=2$ and
$x_1=7$, $x_2=8$, $x_3=8$, $x_4=9$, $x_5=9$, $x_6=13$ and $y=27$.
Clearly, there is no solution to {\ThreePartition} since the partition
set containing $x_6$
would need two additional numbers that add up to $27-13=14$, which is not possible.
In the corresponding instance of $P|\pmtn,r_j|\sum C_j$ we will have 
$A = 324$ and $B = 52488$. The instance consists of $n=2$ machines and
$656$ jobs:
\begin{itemize}
\item
Six x-jobs with processing times $7A$,  $8A$, $8A$,  $9A$, $9A$  and $13A$, all
	released at time 0,
\item
Two B-jobs with processing time $B$, released at time $27A$, and
\item
$2A$ 1-jobs with processing time $1$, released at time $27A+B$.
\end{itemize}
As shown in Figure~\ref{fig: counterexample},
all x-jobs in this instance can be scheduled in the time interval $[0,27A]$ using
McNaughton's method, and the
objective value of the shown schedule is at most $D=3nAy+n(Ay+B)+n\sum_{i=1}^A(Ay+B+i)$,
because all x-jobs complete no later than at time $Ay$.
Thus the ``if'' implication does not hold.


\section{A New Proof of $\NP$-Hardness}
\label{sec: new proof}

In this section we present a corrected proof. Our proof, as before, uses a reduction from
$\ThreePartition$, although the construction is more involved.

It has been proven that, for any instance of $P|\pmtn,r_j|\sum C_j$,
there exists an optimal schedule
where preemptions occur only at integer times (see~\cite{BBCDKS07}, for example);
hence we will make this assumption throughout the paper, namely we will assume
that the time is divided into unit time slots, each either idle or fully
filled by a unit fragment of one job.

Let us fix an instance of  $\ThreePartition$ consisting of positive integers
$x_1,\dots,x_{3n},y$, where $\sum_i x_i=ny$ and $y/4 <  x_i < y/2$ for all $i$. Without
loss of generality, we assume that $n\ge 2$. Using this instance,
we construct an instance of $P|\pmtn,r_j|\sum C_j$ with $n$ machines. In this construction
we use the following values:
\begin{align*}
\lambda &= 2n(n-1)
\\
L &= n\lambda y+\lambda
\\ 
{\OPT} &= yL\frac{n(n-1)}{2} + nyL\frac{n(n-1)}{2}  + n \lambda y
\\
T &= n^2 yL + 3n{\OPT}
\end{align*}
Let $B_i=\sum_{l=1}^{i-1}(nLx_l+{\OPT})$ for $i = 1,...,3n$. Note that
$B_1 = 0$. For convenience, we also let $B_{3n+1} = T$.
We partition the time interval $[0,T)$ into $3n$
\emph{blocks}, where the $i$-th block is the interval $[B_i,B_{i+1})$.
Thus each block $i$, has length $nLx_i+{\OPT}$.

We also have another special ``cork'' block in the interval $[T,T+{\OPT})$. 
The instance will have jobs of four types:
\begin{description}
\item{$\rmS$-\emph{jobs}:}  In  each  block  $i$, for each integer time 
 $t =  B_i,B_i+1,...,B_i+(n-1)Lx_i-1$, we release $n-\lceil  \frac{t-B_i+1}{Lx_i}\rceil$
 jobs with unit
 processing time.  The idea is that these S-jobs should form a staircase-shaped
 schedule in their block $i$, with  each  stair step having (ideally) length $Lx_i$.
\item{$\rmX$-\emph{jobs}:}  These jobs  correspond   to    the   numbers $x_i$ in
   the original {\ThreePartition} instance: in each  block $i$, we will have
	one job $X_i$ of length
  $\lambda x_i$  released  at the beginning  of the block,  i.e. at time $B_i$.
\item{$\rmF$-\emph{jobs}:} This is a set of 
	$n$ jobs $F_1,...,F_n$, released at time $0$. Their role is to fill
  the  idle times  in each  machine $k$.  The length  of each job  $F_k$ is
  $T-(k-1)Lny - \lambda y$.
\item{$\rmC$-\emph{jobs}:} This is a set of $n{\OPT}$  unit jobs 
  released in the cork block. Specifically, for each time $t = T,...,T+{\OPT}-1$, we
 	release $n$ unit jobs at time $t$. The purpose of these jobs is to force
	all $\rmS$-jobs, $\rmX$-jobs and $\rmF$-jobs to complete no later than at time $T$.
\end{description}

Note that this transformation can be computed in polynomial time if the
instance of $\ThreePartition$ and the constructed instance of $P|\pmtn,r_j|\sum C_j$
are represented in the unary encoding. 
This will be sufficient for our purpose, since $\ThreePartition$ is strongly $\NP$-hard.

To simplify calculations,
instead of  minimizing $\sum C_j$,  we  will use the objective function
$\sum D_j$ where $D_j=C_j-r_j-p_j$, which is of course equivalent.
We will refer to $D_j$  as  the  \emph{delay}  of  job  $j$. For a schedule $\sigma$,
by $\delay(\sigma)$ we will denote the total delay (that is, $\sum D_j$)
of $\sigma$. For $\phi\in\braced{\rmS,\rmX,\rmF,\rmC}$, by
$\delay_{\phi}(\sigma)$ we denote the contribution of $\phi$-jobs to the
total delay in $\sigma$.


\begin{Theorem} \label{thm:main}
The instance of {\ThreePartition} has a solution if and only if
the instance of $P|\pmtn,r_j|\sum C_j$  constructed above
has a schedule with total delay at most ${\OPT}$.
\end{Theorem}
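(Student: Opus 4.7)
The plan follows the two directions. For ``$\Longrightarrow$'', given a $\ThreePartition$ solution $P_1,\dots,P_n$, I build an explicit schedule. For each block $i$, let $k_i$ be the machine with $i\in P_{k_i}$. In block $i$, route the $\rmS$-jobs so that machine $k$ handles $\rmS$-jobs during the staircase steps $0,\dots,n-k-1$ and is idle thereafter; this processes every $\rmS$-job immediately, incurring zero $\rmS$-delay. Schedule $X_i$ on machine $k_i$ starting at $B_i+(n-k_i)Lx_i$, so that $D_{X_i}=(n-k_i)Lx_i$. Assign $F_k$ entirely to machine $n-k+1$, filling every remaining slot; a direct count using $\sum_{i\in P_k}x_i=y$ shows this fits exactly, so $C_{F_k}=T$ and $D_{F_k}=(k-1)Lny+\lambda y$. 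Schedule each $\rmC$-job at its release time. Summing gives
\[
\sum_i (n-k_i)Lx_i + \sum_k \bigl((k-1)Lny+\lambda y\bigr) = Ly\frac{n(n-1)}{2} + Ln^2y\frac{n-1}{2} + n\lambda y = \OPT.
\]

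For ``$\Longleftarrow$'', suppose $\delay(\sigma)\le\OPT$. I first rule out non-$\rmC$ work inside the cork block $[T,T+\OPT)$: any such unit displaces a $\rmC$-job, and the induced $\rmC$-delay combined with the lower bounds established below would force the total above $\OPT$. Consequently all $\rmS$-, $\rmX$-, and $\rmF$-jobs finish by $T$, the $\rmC$-delay vanishes, and since the non-$\rmC$ work equals $nT$ there is no idle time in $[0,T)$. For each block $i$ set $\delta_i := D_{X_i}\ge 0$ and $\tau_i := \lfloor \delta_i/(Lx_i)\rfloor$, the step of block $i$ containing the bulk of $X_i$; a standard smoothing argument lets us reduce to $X_i$ running as a single contiguous block inside step $\tau_i$.

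The main combinatorial step is a Hall-type feasibility analysis. Let $a(t)$ denote the number of machines available for $\rmF$-work at time $t$. Placing $X_i$ in step $\tau_i$ lowers $a(t)$ by one on an interval of length $\lambda x_i$. A direct calculation of the areas $\mathrm{area}\{t:a(t)\ge j\}$, combined with Hall's condition for preemptive parallel feasibility, shows that every $F_k$ can complete by $T$ only if
\[
\sum_{i:\tau_i<k} x_i \;\le\; k\,y \qquad (k=1,\dots,n).
\]
Interchanging the order of summation,
\[
\sum_i \tau_i\, x_i \;=\; \sum_{k=1}^{n-1}\sum_{i:\tau_i\ge k} x_i \;\ge\; \sum_{k=1}^{n-1}(n-k)\,y \;=\; \frac{y\,n(n-1)}{2},
\]
so $\sum D_{X_i}\ge Lyn(n-1)/2$, with equality only when $\sum_{i:\tau_i=j}x_i = y$ for every $j=0,\dots,n-1$. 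A complementary accounting on the $\rmF$-side, tracking when each $F_k$'s last execution is forced to occur given the $a(t)$ profile, yields $\sum D_{F_k}\ge Ln^2y(n-1)/2 + n\lambda y$, with equality precisely when the feasibility inequalities above are all tight. The two bounds sum exactly to $\OPT$, so under $\delay(\sigma)\le\OPT$ both are attained with equality.

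Equality in the $\rmX$-delay bound forces $\sum_{i:\tau_i=j}x_i = y$ for every $j=0,\dots,n-1$. Because $y/4<x_i<y/2$, the only subsets of the $x_i$ summing to $y$ have cardinality three, so setting $P_{n-j} := \{i:\tau_i=j\}$ gives a valid $\ThreePartition$. The main obstacle is the complementary $\rmF$-delay bound: one must rule out schedules that attempt to finish $\rmF$-jobs unusually early by exploiting slack in the feasibility inequalities, and the $a(t)$-bookkeeping must be made rigorous in the face of arbitrary preemptions of both $\rmS$- and $\rmF$-jobs (the reduction to $\rmS$-jobs being processed exactly at their release times being itself a subsidiary lemma).
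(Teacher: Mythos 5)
Your $(\Rightarrow)$ direction is the same construction as the paper's, just with machines indexed in reverse order; the arithmetic checks out. Your $(\Leftarrow)$ direction reaches the same key partition inequality as the paper — your $\sum_{i:\tau_i<k} x_i \le ky$ is exactly Lemma~\ref{lemma:qtyE} in the form $\sum_{i\in P_k\cup\ldots\cup P_n} x_i \ge (n-k+1)y$ — but you get there by a Hall/area-profile argument rather than the paper's mass-balance counting of $\rmS$- and $\rmF$-work on the high-indexed machines. This is a genuinely different (and arguably cleaner) route to that inequality, \emph{conditional on the structural reductions you invoke}; and indeed the area computation works out exactly because $\sum_{k\le j} p_{F_k}$ and $\int_0^T \min(a(t),j)\,dt$ cancel term by term.

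The problem is that those structural reductions are precisely where the paper's proof does its real work, and you cannot just postpone them. Reducing to a schedule in which every $\rmS$-job runs at its release time (zero $\rmS$-delay), and $X_i$ sits as one contiguous piece inside a single staircase step, is \emph{not} something the paper establishes; the paper only shows (Lemmas~\ref{lem:stair}--\ref{eqn:epsilonD}) that the $\rmS$-jobs form a non-increasing staircase, that each machine in each block runs $\rmS$, then $\rmX$, then $\rmF$, and that the total \emph{excess} $\sum_{i,k} E^i_k$ is $<\lambda$, i.e., small but not zero. Your argument as written is circular: you need $\delay_\rmS(\sigma)=0$ to set up the availability profile $a(t)$ that feeds the Hall bound, but the only way $\delay_\rmS(\sigma)=0$ emerges is as a \emph{consequence} of the very $\delay_\rmX$ lower bound you are trying to prove. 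Fixing this requires tolerating a nonzero $\rmS$-delay of order $<\lambda L$ and tracking the resulting slack in the area inequalities — which is essentially what the paper's excess/deficiency bookkeeping does, and is the step you defer as ``a subsidiary lemma.'' A second, smaller inaccuracy: you treat $\sum D_{F_k}\ge Ln^2y(n-1)/2 + n\lambda y$ as a lower bound attained ``precisely when the feasibility inequalities are tight,'' but once one shows (as in Lemmas~\ref{lemme:block}--\ref{lemme:complete}) that all $\rmF$-jobs must finish exactly at $T$, this delay is \emph{forced} to equal that value regardless of the $a(t)$ profile; it is not a degree of freedom that trades off against the Hall conditions.
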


The rest of this section is devoted to the proof of Theorem~\ref{thm:main}.
We will prove the two implications in the theorem separately.

\medskip
$(\Rightarrow)$
For  the  ``only if'' direction,  consider  a  partition
$P_1,\dots,P_n$  such  that  $\sum_{i\in  P_k}x_i=y$  for  every  $k$.
Schedule all $\rmS$-jobs at their release times to form stairs in each block
$i$. Schedule the $\rmC$-jobs at  their release times, to form the cork block. 
For each $k$, if $i\in P_k$, then schedule the $\rmX$-job corresponding to $x_i$ 
at offset $(k-1)Lx_i$ in block $i$ on machine $k$. 
These jobs are scheduled  without preemption.  
For each $k$, schedule job  $F_k$ preemptively on
machine $k$, so  that it completes at time $T$.  
By  the property of the
sets  $P_1,\dots,P_n$, all  $n$  machines are now completely filled up to time
$T+{\OPT}$. Figure~\ref{fig: ideal schedule} shows an
illustration of such a schedule.


\begin{figure}[htbp]
\begin{center}
\includegraphics[width=6.2in]{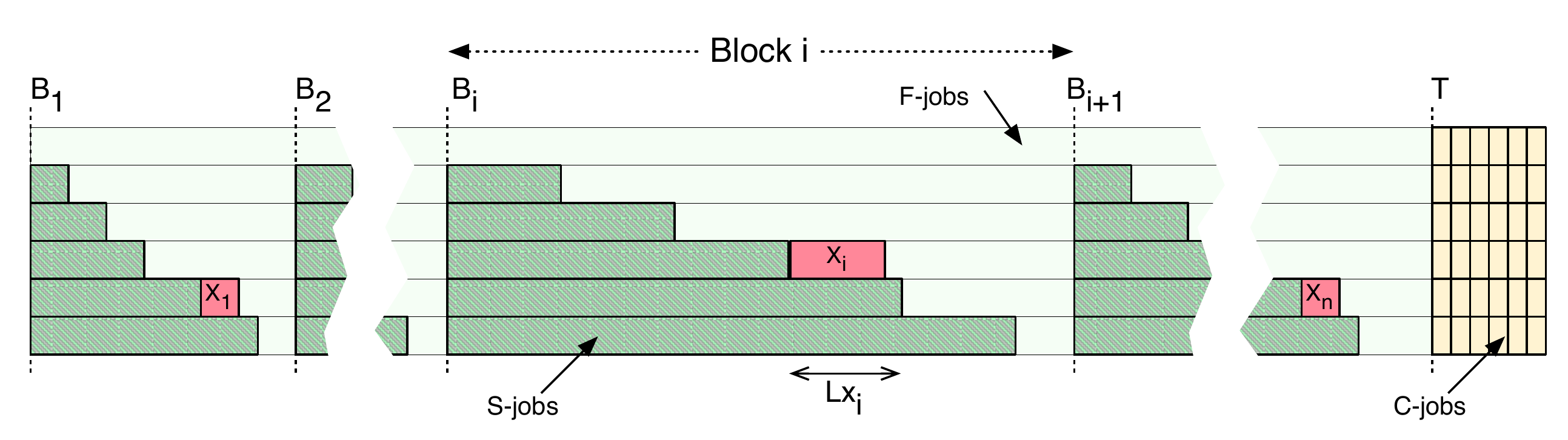}
\caption{A schedule with delay equal to ${\OPT}$.}
\label{fig: ideal schedule}
\end{center}
\end{figure}


We now  focus on the delay of  this schedule. The $\rmS$-jobs  and $\rmC$-jobs do
not contribute to the delay, as they start at their release time. The $\rmF$-jobs
complete at time $T$ and hence generate a delay of $nyL\frac{n(n-1)}{2}+ \lambda  ny$.  
Finally, all  $\rmX$-jobs that complete  on a machine $k$
generate a  total delay of $(k-1)Ly$. Adding up the delays of different types of
jobs, we obtain the total delay of {\OPT}, as required.

\medskip
$(\Leftarrow)$
In the following,  the ``if'' direction of the proof is detailed.  As a
first step,  we fix  a schedule  $\sigma$ of delay  at most  ${\OPT}$. 
To simplify the argument, we first argue that we can make some simplifying 
assumptions about $\sigma$.

We first assume that $\sigma$ is $\OPT$-dominant, where the dominance property is defined as follows.
For two schedules $\sigma_1$, $\sigma_2$, we say that $\sigma_1$ \emph{strictly dominates}
$\sigma_2$  if either
\begin{description}
\item{(i)} there  is  a time  $t_0$  such that  in  all time  slots
$t<t_0$, $\sigma_1$ and $\sigma_2$  schedule the same number of $\rmS$-jobs,
while at  time $t_0$, $\sigma_1$  schedules strictly more  $\rmS$-jobs than
$\sigma_2$, or
\item{(ii)} $\sigma_1$ and $\sigma_2$ execute the same number of $\rmS$-jobs in each
time slot, and there is  a time $t_0$ such that  in  all time  slots
$t<t_0$, $\sigma_1$ and $\sigma_2$  schedule the same number of $\rmX$-jobs,
while at  time $t_0$, $\sigma_1$  schedules strictly more  $\rmX$-jobs than
$\sigma_2$. (As we shall prove shortly, we can assume that
at most one $\rmX$-job is executed at any time.)
\end{description}
Then $\sigma$ is called $\OPT$-dominant if it is not strictly dominated
by any other schedule of delay at most $\OPT$. Clearly, there may be many
$\OPT$-dominant schedules.

In   addition  we  assume   the  \emph{vertical  ordering
  property}, stating that in every  time slot, jobs are sorted from
the first  to the  last machine according to  the order  
$F_1,\dots,F_n$, $X_1,\dots ,X_{3n}$, followed by $\rmS$-jobs and then followed by $\rmC$-jobs.
Reordering  the units of jobs scheduled at the same time slot on different machines
has no impact  on the delay of
the  schedule, so this  last assumption  can be  made without loss of
generality.


\begin{Lemma}\label{lemme:block}
In $\sigma$, all  $\rmX$-jobs and all $\rmS$-jobs complete  strictly before  the end  of their
blocks, and the $\rmF$-jobs complete no later than at time $T$.
\end{Lemma}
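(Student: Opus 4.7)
My plan is to prove the lemma by contradiction, treating the three conclusions separately: (1) every S-job in block $i$ completes strictly before $B_{i+1}$, (2) every X-job in block $i$ completes strictly before $B_{i+1}$, and (3) every F-job completes by time $T$. The first two parts will follow from single-job delay bounds, while part (3) requires a more global cork-block argument.

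For part (1), suppose an S-job $j$ in block $i$ has $C_j \ge B_{i+1}$. Since $r_j \le B_i + (n-1)Lx_i - 1$, we get
\[
D_j = C_j - r_j - 1 \;\ge\; B_{i+1} - B_i - (n-1)Lx_i = Lx_i + \OPT \;>\; \OPT,
\]
so $\delay(\sigma) \ge D_j > \OPT$, a contradiction. For part (2), the analogous calculation for $X_i$ with $r=B_i$ and $p=\lambda x_i$ gives $D_{X_i} \ge (nL-\lambda)x_i + \OPT > \OPT$, using $nL-\lambda = n^2\lambda y + (n-1)\lambda > 0$; again a contradiction.

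For part (3), a single-job estimate is not enough, so I proceed in two stages. First I would rule out any job completing strictly after $T+\OPT$: for S-, X-, and F-jobs this follows by extending the single-job bounds (for instance, $C_{F_k} > T+\OPT$ forces $D_{F_k} \ge \OPT + 1 + (k-1)Lny + \lambda y > \OPT$); for C-jobs I switch to an aggregate bound, using that if $w_F$ F-units lie inside the cork block $[T,T+\OPT]$, then each F-job $F_k$ with $w_{F,k}$ such units satisfies $C_{F_k} \ge T + w_{F,k}$ (yielding $\delay_F \ge \text{good}\;\delay_F + w_F$), while the $w_F$ displaced C-jobs give $\delay_C \ge w_F$. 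Second, combined with parts~(1) and~(2) no S/X unit can lie in the cork block, so the $n\OPT$ cork slots must be occupied by exactly the $n\OPT$ C-jobs, leaving no room for F. Every F-job therefore completes by time $T$.

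The hardest step will be the aggregate cork-block accounting in the first stage: proving that the combined excess $\delay_F + \delay_C \ge \text{good}\;\delay_F + 2w_F$ cannot be offset by savings in $\delay_X$ (which are capped by good $\delay_X = Lyn(n-1)/2$) whenever $w_F > 0$. I plan to sharpen the naive bound $2w_F \le \text{good}\;\delay_X$ by invoking the $\OPT$-dominance and vertical-ordering properties of $\sigma$, which together force the X-jobs into a specific structural position within the blocks and limit how much their total delay can actually be reduced. This should yield a strict inequality whenever any F-unit leaks into the cork, closing the argument and giving part~(3).
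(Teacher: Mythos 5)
Your argument for the first two claims (that $\rmS$-jobs and $\rmX$-jobs finish strictly inside their blocks) is correct and essentially the paper's: a single job crossing its block boundary already has delay at least $Lx_i+\OPT>\OPT$ (resp.\ $(nL-\lambda)x_i+\OPT>\OPT$), contradicting $\delay(\sigma)\le\OPT$.

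Your treatment of the third claim ($\rmF$-jobs finish by $T$), however, has a genuine gap. You bound $\delay(\sigma)$ by \emph{comparison to the ideal schedule}, arriving at $\delay_\rmF+\delay_\rmC\ge\text{good}\;\delay_\rmF+2w_\rmF$ and then needing to rule out that the possible \emph{savings} in $\delay_\rmX$ (which can be as large as $\text{good}\;\delay_\rmX = yL\,\frac{n(n-1)}{2}$) cancel the small excess $2w_\rmF$ (which can be as small as $2$). This gap is enormous, and your plan to close it ``by invoking $\OPT$-dominance and vertical ordering to force the X-jobs into specific positions'' is not carried out; moreover, the structural facts you would need are precisely what Lemmas~\ref{lem:stair} and~\ref{lem:job order} establish, and those are proved \emph{after} and \emph{from} the present lemma, so appealing to them here would be circular. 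The paper avoids the whole problem by never comparing to the ideal schedule: assume WLOG that $\rmC$-jobs run in release order, and suppose some $\rmF$-job finishes at $T+\tau$ with $\tau>0$. That one $\rmF$-job has delay $\ge\tau$, and its last slot is $[T+\tau-1,T+\tau)$, so in that slot at most $n-1$ $\rmC$-jobs run; since $n$ new $\rmC$-jobs are released per slot, the backlog of released-but-unstarted $\rmC$-jobs stays $\ge1$ for every $t=T+\tau-1,\dots,T+\OPT-1$, giving $\delay_\rmC\ge\OPT-\tau+1$. Hence $\delay(\sigma)\ge\tau+(\OPT-\tau+1)>\OPT$, a contradiction, with no reference to $\delay_\rmX$ or $\delay_\rmS$ at all (they are simply nonnegative). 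You should replace your aggregate-comparison plan with this absolute lower-bound argument.
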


\begin{proof}
Since  the total delay is assumed  to be at most  {\OPT}, and no jobs are released
in the last slots of each block, all $\rmS$-jobs
and $\rmX$-jobs have  to complete strictly before  the end  of their corresponding blocks. 
  
To show the second claim, without loss of generality, we assume that $\rmC$-jobs are scheduled
in order of release times, that is, for any two $\rmC$-jobs $i$ and $j$,
$r_i < r_j$ implies that $j$ is scheduled not earlier than $i$.
Suppose that some $\rmF$-job completes at time $T+\tau$, where $\tau>0$. 
If $\tau\ge\OPT+1$, then the delay of this $\rmF$-job exceeds $\OPT$.
Otherwise, using our assumption about $\rmC$-jobs, this $\rmF$-job
forces a delay for at last one $\rmC$-job released in each time slot of the cork block. Since the $\rmF$-job has itself a
delay of at least $\tau$, so the total delay would also exceed $\OPT$. 
\qed \end{proof}


\begin{Lemma}\label{lemme:idle}
There is no idle time in $\sigma$ before time $T$.
\end{Lemma}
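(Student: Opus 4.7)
My plan is a direct total-work accounting argument: I would show that the combined processing time of all $\rmS$-, $\rmX$-, and $\rmF$-jobs equals exactly $nT$, which is the full machine capacity over the interval $[0,T)$. Combined with Lemma~\ref{lemme:block} (which forces all these jobs to finish by time $T$) and the fact that $\rmC$-jobs cannot begin before their release times, all at least $T$, this leaves no room for an idle slot in $[0,T)$.

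For the work count, the $\rmS$-jobs released in block $i$ arrive in $n-1$ consecutive groups of $Lx_i$ slots each, where the $k$-th group releases $n-k$ unit jobs per slot. Summing $Lx_i\bigl(1+2+\cdots+(n-1)\bigr)$ gives $\tfrac{n(n-1)}{2}Lx_i$ units per block, hence $\tfrac{n^2(n-1)}{2}Ly$ in total, using $\sum_i x_i = ny$. The $\rmX$-jobs contribute $\sum_i \lambda x_i = n\lambda y$, and the $\rmF$-jobs contribute $\sum_{k=1}^n\bigl(T-(k-1)Lny - \lambda y\bigr) = nT - \tfrac{n(n-1)}{2}Lny - n\lambda y$. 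When these three quantities are added, the $Ly$-terms and the $\lambda y$-terms cancel, leaving precisely $nT$.

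To conclude, Lemma~\ref{lemme:block} places all $\rmS$-, $\rmX$-, and $\rmF$-work inside $[0,T)$, while the $\rmC$-jobs contribute zero work in $[0,T)$ since their release times are at least $T$. Thus the total processing performed on the $n$ machines during $[0,T)$ equals the total non-$\rmC$ work, namely $nT$, matching the machines' capacity exactly. Any idle slot in $[0,T)$ would make the processed amount strictly less than $nT$, contradicting this equality. I expect the only genuinely delicate step to be verifying the cancellation in the $\rmS/\rmX/\rmF$-work sum; everything else follows automatically from Lemma~\ref{lemme:block} and the definitions of the job parameters.
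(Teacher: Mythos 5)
Your proposal is correct and uses essentially the same argument as the paper's own proof: invoke Lemma~\ref{lemme:block} to confine all $\rmS$-, $\rmX$-, and $\rmF$-work to $[0,T)$, and observe that this work totals exactly $nT$, the capacity of the $n$ machines over that interval. The paper simply asserts the $nT$ figure; you have supplied the (correct) arithmetic verifying it.
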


\begin{proof}
  Using Lemma~\ref{lemme:block}, all $\rmS$-jobs, $\rmX$-jobs and $\rmF$-jobs have to
  be finished at  time $T$.  Since the total  processing time of these
  jobs is $nT$, the lemma follows.
\qed \end{proof}


\begin{Lemma}\label{lemme:complete}
In $\sigma$, all $\rmF$-jobs complete exactly at time $T$.
\end{Lemma}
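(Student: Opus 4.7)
The plan is to focus on the single unit slot $[T-1, T]$ just before the cork block and show that only $\rmF$-jobs can occupy it, which pins down their completion time. The three ingredients already in hand are exactly what is needed. First, by Lemma~\ref{lemme:idle} there is no idle time anywhere before $T$, so all $n$ machines are busy during the slot $[T-1, T]$. Second, by Lemma~\ref{lemme:block} every $\rmX$-job and every $\rmS$-job completes strictly before the right endpoint of its block; in particular, the $\rmX$- and $\rmS$-jobs of block $3n$ (the only ones whose block could possibly overlap the slot $[T-1, T]$) complete strictly before $B_{3n+1}=T$, hence they do not run in $[T-1, T]$. Third, by construction no $\rmC$-job is released before time $T$, so no $\rmC$-job runs in $[T-1, T]$ either.

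Combining these three observations, the only type of job eligible to execute in the slot $[T-1, T]$ is an $\rmF$-job. Since all $n$ machines are busy during this slot and there are exactly $n$ $\rmF$-jobs in total, each machine must run a distinct $\rmF$-job throughout $[T-1, T]$, and consequently every single $\rmF$-job has an execution interval ending at time $T$. Together with the upper bound $C_{F_k}\le T$ from Lemma~\ref{lemme:block}, this forces $C_{F_k}=T$ for every $k$, which is the claim.

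There is no real obstacle here beyond correctly identifying which jobs are forbidden from running in $[T-1, T]$; the only point that needs a moment of care is that the block-based exclusion for $\rmX$- and $\rmS$-jobs is strongest for the last block, whose right endpoint coincides with $T$, and this is precisely why the argument must be performed at the slot $[T-1, T]$ rather than earlier in the schedule.
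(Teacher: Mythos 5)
Your proof is correct and takes essentially the same approach as the paper: both use Lemma~\ref{lemme:block} to exclude $\rmS$- and $\rmX$-jobs from slot $[T-1,T]$ and Lemma~\ref{lemme:idle} to force all $n$ machines to be busy there, leaving only the $n$ $\rmF$-jobs to fill that slot. You just spell out a couple of points the paper leaves implicit (the $\rmC$-jobs being released too late, and the count of $n$ machines versus $n$ $\rmF$-jobs).
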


\begin{proof}
At  time $T-1$,  all  $\rmS$-jobs  and $\rmX$-jobs  are  already completed,  by
Lemma~\ref{lemme:block}.   So   only  $\rmF$-jobs can execute at time $T-1$. 
The lemma now follows from Lemma~\ref{lemme:idle}.
\qed \end{proof}

As previously observed, we can assume that the $\rmC$-jobs do  not generate any
delay. By Lemma~\ref{lemme:complete}, the total delay of $\rmF$-jobs
is $nyL\frac{n(n-1)}{2} + \lambda ny$.
Removing  this contribution of $\rmF$-jobs to  the objective function
leads to the following bound, which will play an important role in the rest of the proof:

\begin{equation}
\delay_{\rmS}(\sigma) + \delay_{\rmX}(\sigma) \leq yL\frac{n(n-1)}{2}.
\label{eqn:remaining}
\end{equation}


The lemma below says that the $\rmS$-jobs of each block must be scheduled in $\sigma$ so as to form
a staircase shape, similar (but not necessarily identical)
to the schedule shown in Figure~\ref{fig: ideal schedule}.

\begin{Lemma}\label{lem:stair}
Schedule $\sigma$ has the following property: in each block, going from left to right,
the  numbers  of  $\rmS$-jobs in the time slots of this block form a non-increasing sequence.
\end{Lemma}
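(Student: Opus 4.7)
The plan is to prove the lemma by contradiction, exploiting the $\OPT$-dominance of $\sigma$. Write $s_t$ for the number of $\rmS$-jobs executed at slot $t$. If $(s_t)$ is not non-increasing within some block $i$, then some adjacent pair $t_1,t_1+1$ inside the block satisfies $s_{t_1}<s_{t_1+1}$ (since a strict rise over the block forces one across some adjacent pair). The goal is to build a modified schedule $\sigma'$ with $\delay(\sigma')\le\delay(\sigma)\le\OPT$ that strictly dominates $\sigma$ via criterion~(i) with witness $t_0=t_1$: $\sigma'$ will match $\sigma$ on the $\rmS$-counts of every slot before $t_1$ but have one more $\rmS$-job at $t_1$, contradicting the $\OPT$-dominance of $\sigma$.

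The workhorse is the following pigeonhole. Let $N(t)$ be the set of non-$\rmS$-jobs running at slot $t$; for $t<T$ these are of type $\rmX$ or $\rmF$ by Lemma~\ref{lemme:idle}, and $|N(t)|=n-s_t$. Since $|N(t_1)|-|N(t_1+1)|=s_{t_1+1}-s_{t_1}\ge 1$, the set $N(t_1)\setminus N(t_1+1)$ is non-empty, so one can choose a non-$\rmS$-job $K$ that runs at $t_1$ but on \emph{no} machine at $t_1+1$ --- hence can be safely inserted into a cell at $t_1+1$ without breaking the one-machine-per-slot rule for $K$. In \emph{Case~A}, some $\rmS$-job $J$ at $t_1+1$ has $r_J\le t_1$; I exchange the cells of $J$ and $K$. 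The job $J$ finishes one slot earlier (delay $-1$), while $K$'s delay rises by at most $1$ (and only if its original last slot was exactly $t_1$), so $\delay(\sigma')\le\delay(\sigma)$; meanwhile $\sigma'$ matches $\sigma$ strictly before $t_1$ and raises the $\rmS$-count at $t_1$ by one.

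In \emph{Case~B} every $\rmS$-job at $t_1+1$ has release time exactly $t_1+1$. Let $r_t$ denote the number of $\rmS$-jobs released at slot $t$ (non-increasing in $t$ within a block), and let $W(t)$ denote the number of $\rmS$-jobs released by $t$ that are still unexecuted after $t$. If $W(t_1)=0$, then all ready $\rmS$-jobs run at $t_1$, giving $s_{t_1}\ge r_{t_1}$, while the Case~B assumption forces $s_{t_1+1}\le r_{t_1+1}$; combined with $r_{t_1}\ge r_{t_1+1}$ this yields $s_{t_1}\ge s_{t_1+1}$, a contradiction. Therefore $W(t_1)\ge 1$, so some $\rmS$-job $J'$ with $r_{J'}\le t_1$ runs at a slot $\tau>t_1+1$ (the option $\tau=t_1+1$ is excluded by Case~B). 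Picking any $\rmS$-job $J''$ at $t_1+1$ (which has $r_{J''}=t_1+1\le\tau$) and any $K\in N(t_1)\setminus N(t_1+1)$, close the 3-cycle $J'\to(\cdot,t_1)$, $K\to(\cdot,t_1+1)$, $J''\to(\cdot,\tau)$. The delays telescope: $-(\tau-t_1)$ for $J'$, $+(\tau-t_1-1)$ for $J''$, and at most $+1$ for $K$, for a total of at most $0$, while $\sigma'$ again strictly dominates $\sigma$ via $t_0=t_1$.

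The main obstacle is Case~B: the simple 2-swap of Case~A is unavailable because no $\rmS$-job eligible for slot $t_1$ appears at $t_1+1$, and one has to (a) use the counting inequality $s_{t_1}\ge r_{t_1}\ge r_{t_1+1}\ge s_{t_1+1}$ to rule out the sub-case $W(t_1)=0$, and (b) wire up a 3-cycle through a later slot $\tau$ while verifying all three feasibility constraints (release times, one-machine-per-slot for $K$, and the unit-length $\rmS$-jobs $J',J''$), all of which reduce to the same pigeonhole choice of $K$.
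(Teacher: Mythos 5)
Your proof is correct and uses the same core idea as the paper (an exchange argument that contradicts the $\OPT$-dominance of $\sigma$, relying on the pigeonhole that some $\rmX$- or $\rmF$-job runs at the earlier slot but not the later one), but you execute it differently. The paper first invokes a preliminary normalization: since $\delay_{\rmS}$ depends only on the multiset of slots assigned to $\rmS$-jobs, one may assume without loss of generality that $\rmS$-jobs are assigned to their slots in order of release time. Under that assumption your Case~B can never occur --- if every $\rmS$-job at slot $t$ were released exactly at $t$, release-order scheduling together with the non-increasing release pattern would force $s_{t-1}\ge r_{t-1}\ge r_t\ge s_t$, so some $\rmS$-job at $t$ is always released strictly earlier --- and the paper therefore only needs the single $2$-swap corresponding to your Case~A. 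You skip the normalization and instead handle Case~B directly with the counting argument plus a $3$-cycle. Both routes are sound; the paper's reordering trick buys a one-case proof, while your version spells out explicitly the counting inequality $s_{t_1}\ge r_{t_1}\ge r_{t_1+1}\ge s_{t_1+1}$ that the paper leaves implicit in the phrase ``by the release pattern of the $\rmS$-jobs,'' and your feasibility/delay accounting for the $3$-cycle is careful and correct.
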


\begin{proof}
The proof uses an argument by contradiction. Suppose that
  inside some block  there are time slots $t-1$ and $t$  such that at $t$
  there are  strictly more  $\rmS$-jobs scheduled than  at $t-1$.   For the
  purpose of  this proof, we may  assume that $\rmS$-jobs  are scheduled in
  order of their release times,  since $\delay_{\rmS}$ is independent of the
  ordering of the $\rmS$-jobs.  Therefore  one of the $\rmS$-jobs from time slot
  $t$ must be  already released at time $t-1$,  by the release pattern
  of the $\rmS$-jobs.   Call this job $\ell$.  At time $t-1$  no machine can be
  idle, by Lemma~\ref{lemme:idle}.   So at time $t-1$ we have more units of $\rmX$- or
$\rmF$- jobs than at time $t$. This implies that there is a job $\ell'$
of type $\rmX$ or $\rmF$ scheduled at time $t-1$ that is not scheduled at time $t$.
Hence we
  can exchange  these units of $\ell$ and $\ell'$ without increasing  the total delay.
But this  contradicts the assumption that $\sigma$ is {\OPT}-dominant.
\qed \end{proof}


\begin{Lemma}\label{lem:job order}
Each machine, within each block, executes units of jobs in the following
order: first $\rmS$-jobs, then $\rmX$-jobs (if any), and finally $\rmF$-jobs.
\end{Lemma}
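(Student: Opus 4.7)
The plan is a proof by contradiction, combining vertical ordering, Lemma \ref{lemme:idle}, Lemma \ref{lem:stair}, and the $\OPT$-dominance of $\sigma$. Since no $\rmC$-job is released before $T$ and there is no idle time before $T$, the vertical ordering forces each time slot $t<T$ to split into an $\rmF$-prefix of $f_t$ machines, an $\rmX$-block of $x_t$ machines, and an $\rmS$-suffix of $s_t$ machines, with $f_t+x_t+s_t=n$. Lemma \ref{lem:stair} makes $s_t$ non-increasing within block $i$, so $n-s_t=f_t+x_t$ is non-decreasing; once a machine $k$ satisfies $k\le f_t+x_t$ it cannot re-enter the $\rmS$-region within the block. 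Hence the $\rmS$-slots of machine $k$ inside block $i$ form an initial prefix, settling the $\rmS$-first part of the order.

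For the $\rmX$-before-$\rmF$ part, I would first observe that only $X_i$ is active among $\rmX$-jobs during block $i$ (jobs $X_{i'}$ with $i'<i$ complete before $B_i$ by Lemma \ref{lemme:block}, while $X_{i'}$ with $i'>i$ is not yet released), so $x_t\in\{0,1\}$ throughout the block. Suppose for contradiction that for some machine $k$ there are times $t_1<t_2$ in block $i$ with an $\rmF$-job $F_\ell$ at $(k,t_1)$ and the $\rmX$-unit of $X_i$ at $(k,t_2)$. Vertical ordering yields $k\le f_{t_1}$ and $k=f_{t_2}+1$, so $f_{t_2}<f_{t_1}$. If $x_{t_1}=1$, then $s_{t_1}=n-1-f_{t_1}<n-1-f_{t_2}=s_{t_2}$, contradicting Lemma \ref{lem:stair}; therefore $x_{t_1}=0$.

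The main step, and the main obstacle, is to exhibit a schedule $\sigma'$ strictly dominating $\sigma$ via condition (ii) of the dominance definition without increasing the total delay. The subtlety is that a naive swap of the contents of $(k,t_1)$ and $(k,t_2)$ would be infeasible if $F_\ell$ already runs on another machine at $t_2$. To resolve this, I would invoke pigeonhole: since $f_{t_2}<f_{t_1}$, some $\rmF$-job $F_\mu$ runs at $t_1$ on some machine $k^{(\mu)}\le f_{t_1}$ but not at $t_2$. If $F_\mu=F_\ell$ the simple swap $(k,t_1)\leftrightarrow(k,t_2)$ works; otherwise the three-cell exchange placing $X_i$ at $(k,t_1)$, $F_\mu$ at $(k,t_2)$, and $F_\ell$ at $(k^{(\mu)},t_1)$ is feasible slot by slot: $X_i$ was absent from $t_1$, $F_\mu$ was absent from $t_2$ by choice, and $F_\ell$'s sole $t_1$-slot is only relocated within that same time slot. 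Completion times do not increase: $X_i$'s completion can only decrease (one of its slots moves earlier), and since $x_{T-1}=0$ by Lemma \ref{lemme:complete} forces $t_2\le T-2$, both $F_\ell$ and $F_\mu$ retain their slot at $T-1$ and still complete at $T$. Therefore $\delay(\sigma')\le\delay(\sigma)\le\OPT$ while $\sigma'$ has identical $\rmS$-distribution as $\sigma$, identical $\rmX$-counts at times $t<t_1$, and strictly more $\rmX$-units at $t_1$, strictly dominating $\sigma$ via condition (ii), contradicting the $\OPT$-dominance of $\sigma$.
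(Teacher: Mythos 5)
Your proof is correct and follows the same overall approach as the paper: establish the $\rmS$-first part from the staircase shape of Lemma~\ref{lem:stair} combined with vertical ordering, and establish the $\rmX$-before-$\rmF$ part by an exchange argument that contradicts the $\OPT$-dominance of $\sigma$ via condition~(ii). The only real difference is technical. The paper restricts to \emph{adjacent} slots $t-1,t$ with an $\rmF$-unit followed by an $\rmX$-unit on the same machine (such a pair exists whenever any violation does, since the $\rmS$-first part rules out intervening $\rmS$-slots on that machine and $X_i$ is the unique $\rmX$-job in the block); at $t-1$, vertical ordering together with Lemma~\ref{lem:stair} forces machines $k+1,\dots,n$ to run $\rmS$-jobs, hence $x_{t-1}=0$, and a plain two-cell swap of the stray $\rmF$-unit with $X_i$'s slot is feasible. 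You instead allow arbitrary $t_1<t_2$, derive $x_{t_1}=0$ from $f_{t_2}<f_{t_1}$ and the monotonicity of $s_t$, and correctly note that the naive swap $(k,t_1)\leftrightarrow(k,t_2)$ can be infeasible if $F_\ell$ already runs elsewhere at $t_2$; your pigeonhole choice of $F_\mu$ and the three-cell exchange repair this cleanly. Both are sound; the paper's reduction to adjacent slots makes feasibility immediate (though it leaves the ``adjacent violation exists'' step implicit), while your version is a bit more elaborate but does not tacitly rely on that reduction.
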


\begin{proof}
Fix some block $i$. By Lemma~\ref{lem:stair}, the slots occupied by 
$\rmS$-jobs in this block form a staircase shape; thus on each machine they
are executed before $\rmX$-jobs and $\rmF$-jobs. We still need to show that
slots occupied by $\rmX$-jobs precede those occupied by $\rmF$-jobs.
We know that only one $\rmX$-job is executed in this block, namely $X_i$.

We argue by contradiction. Suppose that there is a time $t$ in this block
such that some machine $k$ executes $X_i$ at time $t$ and executes some
$\rmF$-job at time $t-1$. At both times, according to vertical ordering 
property of $\sigma$, all machines $1,...,k-1$ execute
$\rmF$-jobs and machines $k+1,...,n$ execute $\rmS$-jobs.
This implies that there is some $\rmF$-job executed at time $t-1$ that is
not executed at time $t$. We can then exchange this $\rmF$-slot with
the slot of $X_i$ at time $t$, without increasing the delay 
(by Lemma~\ref{lemme:complete}), thus obtaining a contradiction with $\OPT$-dominance of $\sigma$.
\qed	
\end{proof}

Lemmas~\ref{lemme:block} to \ref{lem:job order} characterize
the structure of our schedule $\sigma$:

\begin{enumerate}
\item All $\rmC$-jobs are scheduled at their release times.
\item All $\rmS$-jobs and $\rmX$-jobs complete in their respective blocks.
\item The $\rmF$-jobs fill in the remaining slots in the blocks and they complete exactly at time $T$.
\item In each block $i$, for each $k$, the number of $\rmS$-jobs executed by machine $k$ is non-decreasing with $k$;
that is, the $\rmS$-jobs form a staircase pattern in each block.
\item In each block $i$, each machine $k$ first executes some $\rmS$-jobs, then some number of units of $X_i$ (if any),
and then some units of $\rmF$-jobs.
\end{enumerate}


\begin{figure}[htbp]
\begin{center}
\includegraphics[width=4.2in]{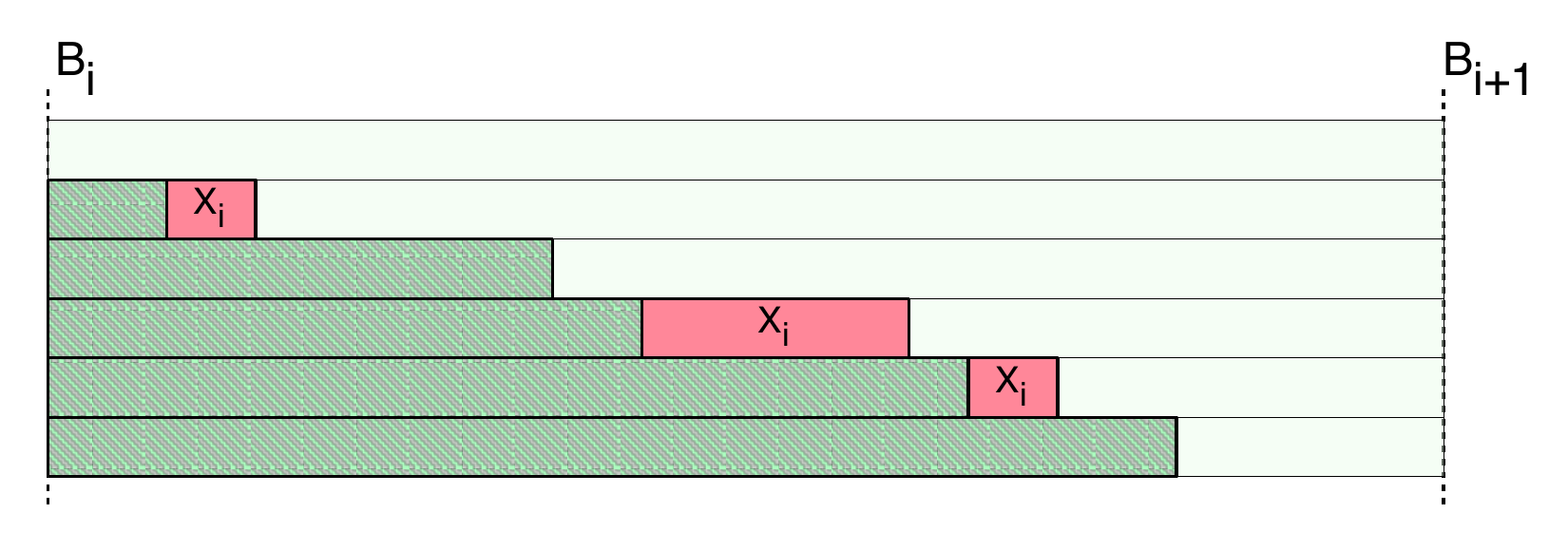}
\caption{The structure of a block in schedule $\sigma$.}
\label{fig: typical block}
\end{center}
\end{figure}


Thus the overall structure of $\sigma$ is similar to the schedule shown in 
Figure~\ref{fig: ideal schedule}; however, within each block $i$ the steps of the staircases of $\rmS$-jobs
may have different lengths, and $X_i$ may not be scheduled in one contiguous block on one machine.
A typical structure of a block is shown in Figure~\ref{fig: typical block}.

\medskip

For each machine $k$ and each block $i$,
the \emph{ideal number} of $\rmS$-jobs on machine $k$ in block $i$ is defined to be
$(k-1)Lx_i$.  This value corresponds to the  number of $\rmS$-jobs
processed on  machine $k$ in  block $i$ in  the $(\Rightarrow)$ direction of
the proof (see Figure~\ref{fig: ideal schedule}).
We also define $\excess_k^i$ to be the \emph{excess} of $\rmS$-jobs on machine
$k$ in block $i$, namely the difference between the actual and
the ideal number of $\rmS$-jobs on machine $k$ in block $i$, if this difference
is non-negative; otherwise let we $\excess_k^i$ be $0$. 
More specifically, if the number of $\rmS$-jobs scheduled by
$\sigma$ on machine $k$ in block $i$ is $c$ then
$\excess_k^i = \max\braced{ c - (k-1)Lx_i, 0}$.


\begin{Lemma}\label{lemme:Dcost}
In schedule $\sigma$, the total delay of the $\rmS$-jobs is bounded by
the following inequality: \\
$\delay_{\rmS}(\sigma) \geq \sum_i \sum_k Lx_i\excess_k^i$.
\end{Lemma}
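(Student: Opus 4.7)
The plan is to prove the per-block inequality $\delay_{\rmS}^{i}(\sigma)\ge Lx_i\sum_k \excess_k^i$ for each block $i$ (here $\delay_{\rmS}^i$ denotes the contribution of the $\rmS$-jobs of block $i$ to $\delay_{\rmS}(\sigma)$, well defined because by Lemma~\ref{lemme:block} each $\rmS$-job completes inside its own block), and then sum over $i$. Fix a block $i$ and reindex time so the block starts at $t=0$. Write $u_k=(k-1)Lx_i$, $\delta_k=s_k^i-u_k$, $K^+=\{k:\delta_k>0\}$, $K^-=\{k:\delta_k<0\}$, and $\Delta_i=\sum_k\excess_k^i$. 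Then $\excess_k^i=\max(\delta_k,0)$, and $\sum_k\delta_k=0$ because both the actual and the ideal schedule execute the same $\sum_k u_k$ $\rmS$-jobs in block $i$ (the total number released); in particular $\Delta_i=\sum_{k\in K^+}\delta_k=\sum_{k\in K^-}(-\delta_k)$.

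By Lemmas~\ref{lem:stair} and~\ref{lem:job order} together with vertical ordering, machine $k$ runs $\rmS$-jobs precisely in the first $s_k^i$ slots of the block. Hence the cumulative executed count up to time $t$ is $E_t=\sum_k\min(s_k^i,t+1)$, and a direct check against the $\rmS$-release pattern shows that the cumulative released count equals $R_t=\sum_k\min(u_k,t+1)$ (which is also the cumulative executed count in the ideal staircase schedule). Feasibility is $R_t\ge E_t$ for every $t$, and $\delay_{\rmS}^{i}(\sigma)=\sum_t(R_t-E_t)$. Decompose the per-slot increment $(R_t-E_t)-(R_{t-1}-E_{t-1})=N_t-P_t$, where $P_t=|\{k\in K^+:u_k\le t<s_k^i\}|$ counts excess machines still running $\rmS$-jobs past the ideal time and $N_t=|\{k\in K^-:s_k^i\le t<u_k\}|$ counts deficit machines already stopped; one checks $\sum_t N_t=\sum_t P_t=\Delta_i$. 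Think of each unit of $N_t$ as the \emph{birth} of a token at time $t$ on some $k_1\in K^-$ and each unit of $P_t$ as a \emph{death} at time $t$ on some $k_2\in K^+$; Abel summation then gives $\delay_{\rmS}^{i}(\sigma)=\sum_{\mathrm{deaths}}t-\sum_{\mathrm{births}}t$, which for any matching of births with deaths equals the sum of per-pair gaps $t_2-t_1$.

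The key step is to lower-bound every such gap. Sort births and deaths chronologically and match the $l$-th earliest birth with the $l$-th earliest death (FIFO matching); feasibility $R_t\ge E_t$ is equivalent to ``cumulative births $\ge$ cumulative deaths at every $t$'', which forces the $l$-th birth to occur no later than the $l$-th death, so each matched pair satisfies $t_1\le t_2$. For such a pair with birth on $k_1\in K^-$ (so $t_1\in[s_{k_1}^i,u_{k_1})$) and death on $k_2\in K^+$ (so $t_2\in[u_{k_2},s_{k_2}^i)$), the staircase property excludes $k_2\le k_1$: otherwise $s_{k_2}^i\le s_{k_1}^i$ and hence $t_2<s_{k_2}^i\le s_{k_1}^i\le t_1$, contradicting $t_1\le t_2$. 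Therefore $k_2\ge k_1+1$, so $u_{k_2}-u_{k_1}\ge Lx_i$, and combining with $t_2\ge u_{k_2}$ and $t_1\le u_{k_1}-1$ gives $t_2-t_1\ge Lx_i+1$. Summing over the $\Delta_i$ matched pairs yields $\delay_{\rmS}^{i}(\sigma)\ge\Delta_i(Lx_i+1)\ge Lx_i\sum_k\excess_k^i$, and summing over blocks finishes the proof.

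The main technical obstacle is setting up the birth/death bookkeeping cleanly: identifying each token with a specific (machine, slot) pair inside an excess or deficit interval, verifying via Abel summation on the piecewise-constant $N_t-P_t$ that $\sum_t(R_t-E_t)=\sum_{\mathrm{deaths}}t-\sum_{\mathrm{births}}t$ (so that the value is matching-independent), and noticing that cumulative feasibility is exactly the statement ``$l$-th birth $\le$ $l$-th death''. Once those pieces are in place the staircase-based exclusion of $k_2\le k_1$ is a one-line contradiction, and the uniform spacing $u_{k_2}-u_{k_1}\ge Lx_i$ of the ideal staircase supplies the final estimate.
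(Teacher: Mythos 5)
Your proof is correct, and it takes a genuinely different route from the paper's, even though both start from the same per-slot backlog function $z(t)=R_t-E_t$ and the identity $\delay_\rmS(\sigma)=\sum_t z(t)$. The paper's argument is a direct pointwise bound: for each block $i$ and each machine $k\ge 2$ with $\excess_k^i>0$ it exhibits a specific interval of length $Lx_i$ (namely $[t_k^i-Lx_i,\,t_k^i)$ with $t_k^i=B_i+(k-1)Lx_i$) on which $z$ is non-increasing and at least $\excess_k^i$ at the right endpoint, so that interval contributes $\ge Lx_i\excess_k^i$ to $\sum_t z(t)$; these rectangles are pairwise disjoint, and summing finishes the proof. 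Your argument instead discretizes the area under $z$ into ``tokens,'' pairs each deficit-induced birth with an excess-induced death via FIFO, shows by Abel summation that $\sum_t z(t)$ equals the total transport cost $\sum(\mathrm{death}-\mathrm{birth})$ independently of the matching, and then lower-bounds each matched gap by $Lx_i+1$ using the staircase ordering of the $s_k^i$. Both arguments hinge on the same two structural facts (the staircase shape from Lemma~\ref{lem:stair} and the uniform step width $Lx_i$ of the ideal profile), so they are morally dual, but yours buys a slightly sharper per-block bound ($\Delta_i(Lx_i+1)$ rather than $Lx_i\Delta_i$), and it makes the role of the $K^+$/$K^-$ imbalance and the machine-index gap $k_2>k_1$ completely explicit, including the side fact $\excess_1^i=0$, which the paper's restriction to $k\ge 2$ glosses over. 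The price is more bookkeeping: you must verify the birth/death decomposition of $(R_t-E_t)-(R_{t-1}-E_{t-1})$, the matching-independence of the transport cost, and the equivalence of feasibility with ``$l$-th birth precedes $l$-th death,'' none of which the paper needs. Both proofs are valid; the paper's is shorter, yours is more mechanistic and arguably more illuminating about where the delay is incurred.
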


\begin{proof}
  For every time
  $t$,  let $z(t)$  be  the  difference between  the  number of  $\rmS$-jobs
  released  at times  $0,1,...,t$  and  the  number of  $\rmS$-jobs
  scheduled at the same times in $\sigma$.  Clearly we have $z(t)\geq 0$ for every
  time $t$.

  Consider an  $\rmS$-job $j$  with release  time $r_j$  and  completion time
  $C_j$. Its delay  is $D_j = C_j-1-r_j$. Job $j$ contributes one unit
  to  $z(t)$ for each time $t = r_j,r_j+1,...,C_j-2$;  hence $\delay_{\rmS}(\sigma) =\sum_t  z(t)$.  

  Now fix a block $i$ and machine $k\geq 2$ with $\excess_k^i > 0$, and let $t_k^i=B_i
  + (k-1)Lx_i$. In the interval $[t_k^i,t_k^i+\excess_k^i)$ at least $(n-k+1)\excess_k^i$ $\rmS$-jobs
  are  scheduled,  but  only  $(n-k)\excess_k^i$  $\rmS$-jobs  are  released.   The
  additional scheduled $\rmS$-jobs must therefore be released strictly before
  time $t_k^i$, which implies $z(t_k^i-1) \geq \excess_k^i$.

	At each time $\tau = t_k^i-Lx_i,...,t_k^i-1$ the number of released $\rmS$-jobs
	is $n-k+1$. By the choice of $i$ and $k$, and by Lemma~\ref{lem:stair},
	at each time
	$\tau$ in this range the number of executed $\rmS$-jobs is at least $n-k+1$.
  We thus obtain that $z(\tau)$ is non-increasing in this range, so we have
\begin{equation*}
	z(t_k^i-Lx_i)\geq  \ldots \geq z(t_k^i-1) \geq \excess_k^i,
\end{equation*}
and thus
\begin{equation*}
	 z(t_k^i-Lx_i) + \ldots + z(t_k^i-1) \geq Lx_i \excess_k^i,
\end{equation*}
which concludes the  proof of the lemma by summing over all choices of $i$ and $k$.
\qed \end{proof}

We  complete  the  proof  with  an  upper bound  on  the  excess  values,
that motivates our choice for the value of parameter $\lambda$.


\begin{Lemma}\label{eqn:epsilonD}
We have $\sum_i \sum_k \excess_k^i < \lambda$.
\end{Lemma}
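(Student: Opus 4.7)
The plan is to argue by contradiction: assume $E := \sum_{i,k} \excess_k^i \ge \lambda$ and derive a violation of~(\ref{eqn:remaining}). Lemma~\ref{lemme:Dcost} together with $x_i \ge 1$ immediately yields $\delay_{\rmS}(\sigma) \ge L E$, so the whole difficulty lies in producing a matching lower bound on $\delay_{\rmX}(\sigma)$ that stays within a small multiple of $E$ of the ideal value $yL\frac{n(n-1)}{2}$.

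To obtain that X-delay bound I would first describe the geometry of $X_i$'s pieces inside block $i$. Let $K_i$ be the set of machines running $X_i$ and $m_i := \max K_i$. Because $a_k^i$ is non-decreasing in $k$ (by the staircase of Lemma~\ref{lem:stair} combined with vertical ordering) and the pieces of $X_i$ on different machines must be disjoint in time, every piece on $k < m_i$ ends no later than $a_{m_i}^i$; consequently $D_{X_i} = a_{m_i}^i + b_{m_i}^i - \lambda x_i$. Introducing $\alpha_k^i := a_k^i - (k-1)Lx_i$, $A_k := \sum_i \alpha_k^i$, and $\mu_i := \lambda x_i - b_{m_i}^i \ge 0$, I would use the machine-wide load identity $\sum_i b_k^i = \lambda y - A_k$ (read off from Lemma~\ref{lemme:idle} and the length of $F_k$) and expand $(m_i - 1) = (k-1) + (m_i - k)$ inside $L\sum_i (m_i-1) x_i$ to obtain
\[
\delay_{\rmX}(\sigma) \;\ge\; yL\frac{n(n-1)}{2} \;-\; \frac{L}{\lambda}\sum_k (k-1) A_k \;+\; ny\sum_i \mu_i \;+\; \sum_i \alpha_{m_i}^i.
\]
The crucial step is recognising that the ``split surcharge'' $\frac{L}{\lambda}\sum_i\sum_{k<m_i}(m_i-k)\,b_k^i$ beats the naive split loss $\sum_i \mu_i$ by the factor $L/\lambda - 1 = ny > 0$, which is positive precisely thanks to our choice $L = n\lambda y + \lambda$.

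The remaining two error quantities then need to be controlled by $E$. From $A_k^+ \le \sum_i \excess_k^i$ (subadditivity of max) together with $\sum_k A_k^+ = \sum_k A_k^-$ one gets $\bigl|\sum_k (k-1) A_k\bigr| \le (n-1) E$. From $(\alpha_{m_i}^i)^- \le \sum_k (\alpha_k^i)^- = \sum_k \excess_k^i$ in each block one gets $-\sum_i \alpha_{m_i}^i \le E$. Substituting into $\delay_{\rmS}(\sigma) + \delay_{\rmX}(\sigma) \le yL\frac{n(n-1)}{2}$ and using $\lambda = 2n(n-1)$ (so that $L - (n-1)L/\lambda = L(2n-1)/(2n)$) collapses the whole argument to $E \cdot L(2n-1)/(2n) \le E$. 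Since $L \gg 1$ by construction, this forces $E = 0$, a fortiori $E < \lambda$.

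The main obstacle is the X-delay lower bound: one has to see that the non-overlap of $X_i$'s pieces on different machines creates exactly the right arithmetic surcharge, and to exploit the global identity $B_k = \lambda y - A_k$ to turn an apparently local $D_{X_i}$ into a quantity expressible in terms of the $A_k$. Once that bookkeeping is organised, the calibrations $\lambda = 2n(n-1)$ and $L = n\lambda y + \lambda$ align the coefficients so that the excess term dominates the correction and the final inequality becomes trivial.
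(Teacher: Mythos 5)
Your approach diverges sharply from the paper's, and the divergence stems from one overlooked fact. You bound $\delay_{\rmS}(\sigma)$ using only $x_i \ge 1$, getting $\delay_{\rmS}(\sigma)\ge LE$, and conclude that ``the whole difficulty lies in producing a matching lower bound on $\delay_{\rmX}(\sigma)$.'' But the {\ThreePartition} instance satisfies $x_i > y/4$ for every $i$, and this is exactly what makes the lemma a one-liner: if $E=\sum_{i,k}\excess_k^i\ge\lambda$, then Lemma~\ref{lemme:Dcost} gives
\[
\delay_{\rmS}(\sigma)\;\ge\;\sum_i\sum_k Lx_i\,\excess_k^i\;>\;\frac{yL}{4}\,E\;\ge\;\frac{yL}{4}\,\lambda\;=\;yL\,\frac{n(n-1)}{2},
\]
which already contradicts \eqref{eqn:remaining} (recall $\delay_{\rmX}(\sigma)\ge 0$). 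No analysis of the X-jobs is needed at all. Your weaker bound $x_i\ge 1$ throws away a factor of $y/4$, and that factor is precisely what the calibration $\lambda = 2n(n-1)$ was chosen to exploit.

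Beyond the unnecessary complexity, the X-delay argument you sketch has a genuine gap. Your ``machine-wide load identity'' $\sum_i b_k^i=\lambda y - A_k$ implicitly assumes that the amount of $\rmF$-work executed on machine $k$ equals the length of job $F_k$, i.e.\ $T-(k-1)Lny-\lambda y$. But the vertical ordering property only guarantees that $F_\ell$ is scheduled on machines $1,\dots,\ell$; it does \emph{not} force $F_k$ to run exclusively on machine $k$, nor does it prevent several of $F_k,\dots,F_n$ from sharing time on machine $k$. So the per-machine F-load is constrained only by a family of \emph{inequalities} (the cumulative bound used in Lemma~\ref{lemma:qtyE}), not an equality, and the displayed expression for $\delay_{\rmX}(\sigma)$ does not follow as stated. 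A secondary issue: from $\sum_k A_k^+=\sum_k A_k^-\le E$ you can only conclude $\bigl|\sum_k(k-1)A_k\bigr|\le 2(n-1)E$, not $(n-1)E$, though this factor of two would not by itself sink the argument. The structural facts you invoke (the pieces of $X_i$ drift to higher-numbered machines over time, $X_i$ completes on machine $m_i$, the split-surcharge idea) are sound and in fact resemble the reasoning the paper uses later when bounding $\delay_{\rmX}$ at the end of the proof of Theorem~\ref{thm:main}; but for \emph{this} lemma they are both unnecessary and, as written, incomplete.
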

\begin{proof}
  For a  proof by contradiction assume  that the left hand  side is at
  least  $\lambda$.   
Then Lemma~\ref{lemme:Dcost},  together with  the assumption  that
  $x_i > y/4$ for all $i$,  would imply  that
\begin{equation*}
\delay_{\rmS}(\sigma) 
  				> \frac{yL}{4} \sum_i \sum_k \excess_k^i \geq \frac{yL}4 \lambda=yLn(n-1)/2.
\end{equation*}
This, however, contradicts the bound in Equation~\eqref{eqn:remaining}.

\qed 
\end{proof}

We denote  by $P_k$  the set of indices of  the $\rmX$-jobs completing  on a machine
$k$, independent of whether they are scheduled entirely on machine $k$ or
not. By the construction,  the lengths of all  $\rmX$-jobs   are multiples of
$\lambda$, which is a strict upper bound on the number of $\rmS$-jobs that can diverge from
the ideal pattern (by Lemma~\ref{lemme:Dcost}).
We will use  this fact, to show that the partition
$P_1,\ldots,P_n$ forms a solution to the original instance of  $\ThreePartition$.


\begin{Lemma}\label{lemma:qtyE}
The following inequality holds for all machines $k$: 
\begin{equation*}
    \sum_{i\in P_k\cup\ldots\cup P_n}x_i \geq (n-k+1)y.
\end{equation*}
\end{Lemma}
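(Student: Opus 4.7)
The plan is to prove the equivalent inequality $\sum_{i \in P_1 \cup \ldots \cup P_{k-1}} x_i \leq (k-1) y$ by bounding the total processing time of $\rmX$-jobs executed on the bottom machines $1,\ldots,k-1$. Denote this quantity $X_{bot}$, and let $S_{bot}$ and $F_{bot}$ be defined analogously for $\rmS$- and $\rmF$-processing on these machines; by Lemma~\ref{lemme:idle} no slot before $T$ is idle, so $S_{bot} + X_{bot} + F_{bot} = (k-1)T$.

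The lower bound $X_{bot} \geq \lambda \sum_{i \in P_1 \cup \ldots \cup P_{k-1}} x_i$ follows from a ``non-decreasing machine index'' property of the $\rmX$-jobs. At any time slot $t$ in block $i$ at which $X_i$ is executing, the vertical ordering forces $X_i$ to occupy machine $n - m(t)$, where $m(t)$ is the number of $\rmS$-jobs in the slot. By Lemma~\ref{lem:stair}, $m(t)$ is non-increasing within the block, so the machine index of $X_i$ is non-decreasing in $t$; the completion machine of $X_i$ is therefore the largest machine it ever uses. In particular, for every $i \in P_{k'}$ with $k' \leq k - 1$, $X_i$ is executed entirely on machines $\{1,\ldots,k-1\}$ and contributes its full length $\lambda x_i$ to $X_{bot}$.

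For the matching upper bound I would use a standard necessary condition for the preemptive schedulability of the $\rmF$-jobs. Let $\psi(t)$ denote the number of machines running an $\rmF$-job at time $t$. For any subset $\mathcal{S}$ of the $\rmF$-jobs, the number of members of $\mathcal{S}$ running in parallel at $t$ is at most $\min(\psi(t),|\mathcal{S}|)$, so summing over $t$ gives $\sum_{F_j \in \mathcal{S}}|F_j| \leq \sum_t \min(\psi(t),|\mathcal{S}|)$. Taking $\mathcal{S} = \{F_1,\ldots,F_{k-1}\}$ (the $k-1$ longest $\rmF$-jobs), and observing that the vertical ordering places the $\rmF$-slots at each time on the smallest-indexed machines so that $\sum_t \min(\psi(t),k-1) = F_{bot}$, yields $F_{bot} \geq \sum_{j=1}^{k-1}|F_j| = (k-1)T - Lny\frac{(k-1)(k-2)}{2} - (k-1)\lambda y$. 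Substituting into the capacity identity gives
\[
S_{bot} + X_{bot} \leq Lny\frac{(k-1)(k-2)}{2} + (k-1)\lambda y.
\]

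It remains to lower-bound $S_{bot}$. Since $\sum_j c_j^i = \frac{n(n-1)}{2}Lx_i$ in each block $i$, the total ``deficit'' $\sum_j \max(0,(j-1)Lx_i - c_j^i)$ equals the total excess $\sum_j \excess_j^i$ in that block. Summed over all $i$ and all $j \leq k-1$, the bottom-deficit is therefore at most $\sum_{i,j}\excess_j^i$, which by Lemma~\ref{eqn:epsilonD} is strictly less than $\lambda$. Consequently $S_{bot} > Lny\frac{(k-1)(k-2)}{2} - \lambda$, and combining with the previous inequality yields $X_{bot} < (k-1)\lambda y + \lambda$, i.e.\ $\lambda \sum_{i \in P_1 \cup \ldots \cup P_{k-1}} x_i < (k-1)\lambda y + \lambda$. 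Integrality of $x_i$ and $y$ then forces $\sum_{i \in P_1 \cup \ldots \cup P_{k-1}} x_i \leq (k-1) y$, equivalent to the lemma. The main obstacle I anticipate is stating the schedulability inequality $F_{bot} \geq \sum_{j=1}^{k-1}|F_j|$ precisely; it is an elementary preemptive-scheduling fact (equivalent to the observation that the $k-1$ largest $\rmF$-jobs can collectively occupy at most $k-1$ machine-slots at any instant), but is not formulated as a separate lemma in the paper.
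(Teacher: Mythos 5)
Your proof is correct, and it is essentially the paper's argument presented in complementary form: where the paper bounds the $\rmS$-, $\rmF$-, and $\rmX$-processing on the \emph{top} machines $k,\dots,n$ from above (via the excess bound of Lemma~\ref{eqn:epsilonD}, vertical ordering of the $\rmF$-jobs, and Lemma~\ref{lemme:idle}) to deduce that $\rmX$-processing there is large, you bound the same three quantities on the \emph{bottom} machines $1,\dots,k-1$ from below to deduce that $\rmX$-processing there is small. Since there is no idle time before $T$, the per-machine totals are fixed, so the two formulations are algebraically equivalent, and the final integrality step is identical. Two small differences of exposition: your lower bound $F_{bot}\ge\sum_{j=1}^{k-1}|F_j|$ is derived through the general preemptive-schedulability inequality $\sum_{F_j\in\calS}|F_j|\le\sum_t\min(\psi(t),|\calS|)$, whereas the paper obtains the equivalent bound directly from the observation that vertical ordering confines $F_\ell$ to machines $1,\dots,\ell$ --- the latter is shorter and avoids introducing a fact not stated in the paper; and you make explicit (and justify) the ``non-decreasing machine index of $X_i$'' property that underlies the step relating $\rmX$-processing on a machine set to the sets $P_k$, which the paper implicitly uses without spelling out, so your proof is actually a bit more complete on that point.
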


\begin{proof}
The amount of $\rmS$-jobs executed by each machine $\ell$ can
be bounded by the amount of $\rmS$-jobs in the ideal pattern (from the proof of the ``if'' implication)
plus the excess values for all blocks, which works out to be at most
$(\ell-1)Lny + \sum_i E^i_\ell$. Therefore,
by Lemma~\ref{eqn:epsilonD}, the total amount of $\rmS$-jobs executed on machines 
numbered $k,...,n$ is strictly smaller than
\begin{equation*}
    (k-1)Lny+\ldots+(n-1)Lny+\lambda.
\end{equation*}
By  the vertical  ordering assumption,  each job  $F_\ell$ can  only be
scheduled on  machines $1,...,\ell$  and therefore the  total amount of
$\rmF$-jobs on machines numbered $k,...,n$ is at most
\begin{align*}
[\,T-(k-1)Lny-\lambda y \,] + ... + [\, T-(n-1)Lny-\lambda y \,].
\end{align*}
Together, the total amount of $\rmS$-jobs and $\rmF$-jobs on machines
$k,...,n$ is strictly smaller than $(n-k+1)T -  (n-k+1)\lambda y + \lambda$.
Thus, using Lemma~\ref{lemme:idle},
the total  length of $\rmX$-jobs completing on machines $k,...,n$ satisfies
\begin{equation*}
	\sum_{i\in P_k\cup\ldots\cup P_n}x_i >  (n-k+1)\lambda y - \lambda.
\end{equation*}

The lemma follows from the fact  that every $\rmX$-job has length that
is a multiple of $\lambda$.
\qed \end{proof}

To complete  the proof  of the theorem,  we focus  on the delay  of the $\rmX$-jobs.   

We know  that if some machine $k$ executes at least one unit of a job
$X_i$ in block $i$, then all earlier slots of machine $m$ in this block execute $\rmS$-jobs. 
Therefore the amount of $\rmS$-jobs on the machines and in blocks where $\rmX$-jobs
complete can give us
a lower bound  on the total completion time of the $\rmX$-jobs, from  which we have to
subtract the total processing time to obtain a bound on $\delay_{\rmX}(\sigma)$.  

In the ideal schedule, the delay of $\rmX$-jobs that complete
on a machine $k$ is $(k-1)Ly$, but in
$\sigma$ this delay could be different. To get a lower bound on this delay,
define the \emph{deficiency of $\rmS$-jobs on machine $k$ in a block $i$} to be
the ideal value of $(k-1)Lx_i$ minus the number of $\rmS$-jobs executed by $k$
in block $i$ of schedule $\sigma$, if this value is non-negative; otherwise we 
let the deficiency to be $0$. Then the delay of $\rmX$-jobs that complete
on a machine $k$
is at least $(k-1)L\sum_{i\in P_k} x_i$ minus the total deficiency of 
$\rmS$-jobs on machine $k$.
But the total deficiency of $\rmS$-jobs, overall all machines and all blocks,
is the same as the total excess $\sum_i\sum_k E^i_k$, so it is strictly
smaller than $\lambda$, by Lemma~\ref{eqn:epsilonD}.
This leads to the lower bound
\begin{align}
  \delay_{\rmX}(\sigma) & > L \sum_{i\in P_2}x_i + 2  L \sum_{i\in P_3}
  x_i +\ldots+(n-1)L\sum_{i\in P_n} x_i - \lambda - n\lambda y,
	\label{eq: delay of x-jobs}
\end{align}
where we use the fact that the total processing time of the $\rmX$-jobs is $n\lambda y$.

Putting everything together now, we apply
Inequality~(\ref{eqn:remaining}), Inequality~(\ref{eq: delay of x-jobs}) above, 
substitute $L = ny\lambda + \lambda$, and then
apply Lemma~\ref{lemma:qtyE} (summing over all values of $k$), 
obtaining the following sequence of inequalities:
\begin{align*}
	yL\frac{n(n-1)}{2} &\ge  \delay_{\rmX}(\sigma) 
		\\
		& > L\left( \sum_{i\in P_2\cup \ldots \cup P_n} x_i
						+ \sum_{i\in P_3\cup \ldots \cup P_n} x_i
    					+ \ldots 
							+\sum_{i\in P_n} x_i\right) - L
		\\
		&\ge yL\frac{n(n-1)}{2} - L.
\end{align*}
The first and last expressions are multiples of $L$. From this derivation we
can thus conclude that the bound from Lemma~\ref{lemma:qtyE}, used in the
last inequality, must be in fact tight.
Specifically, we get that
$\sum_{i\in P_k\cup \ldots \cup P_n}x_i = (n-k+1)y$ for each $k = 2,...,n$.
We thus obtain that $\sum_{i\in P_1} x_i = y$ and, proceeding
by induction on $k$, we have $\sum_{i\in P_k} x_i = y$ as well for all other
machines $k$. This completes the proof of Theorem~\ref{thm:main}.


\paragraph{Acknowledgements.}
M.~Chrobak was partially supported by National Science Foundation grant CCF-1217314.
The authors would like to thank anonymous reviewers for many useful comments and suggestions.

\bibliographystyle{plainnat}
\bibliography{Biblio}
\end{document}